\newtheorem{theorem}{Theorem}
\newtheorem{remark}[theorem]{Remark}
\newcommand*{\fd}
[2]{\mathchoice{\frac{\delta#1}{\delta#2}}
  {\delta #1/\delta#2}{\delta#1/\delta#2}{\delta#1/\delta#2}}
\begin{document}

\title{\textbf{Classification of bi-Hamiltonian pairs\\
extended by isometries}}
\author{Maxim V. Pavlov$^{1}$ \quad Pierandrea Vergallo$^{2}$ \quad Raffaele
Vitolo$^{2,3}$ \\
$^{1}$Department of Mathematical Physics,\\
Lebedev Physical Institute of Russian Academy of Sciences,\\
Leninskij Prospekt, 53, Moscow, Russia;\\
$^{2}$Department of Mathematics and Physics ``E. De Giorgi'',\\
University of Salento, Lecce, Italy\\
$^{3}$INFN, Section of Lecce, Italy}
\date{{\small \itshape
  Dedicated to Allan Fordy on the occasion of his 70th birthday}}
\maketitle

\begin{abstract}
  The aim of this article is to classify pairs of first-order Hamiltonian
  operators of Dubrovin-Novikov type such that one of them has a non-local part
  defined by an isometry of its leading coefficient. An example of such
  bi-Hamiltonian pair was recently found for the constant astigmatism
  equation. We obtain a classification in the case of 2 dependent variables,
  and a significant new example with 3 dependent variables that is an extension
  of a hydrodynamic type system obtained from a particular solution of the WDVV
  equations.
\end{abstract}

\section{Introduction}

The theory of homogeneous first order differential-geometric Poisson
brackets was established by B.A. Dubrovin and S.P. Novikov in 1983 \cite%
{DN83} in the framework of the Hamiltonian formalism for PDEs. Such Poisson
brackets are defined by local differential operators $A=(A^{ij})$ of the
type 
\begin{equation}
A^{ij}=g^{ij}(\mathbf{u})\partial _{x}+\Gamma _{k}^{ij}(\mathbf{u})u_{x}^{k},
\label{eq:2}
\end{equation}%
where $u^{i}=u^{i}(t,x)$ are field variables, $i=1$, \dots , $n$ depending
on two independent variables $t$, $x$. The operator $A$ yields a Poisson
bracket between densities 
\begin{equation*}
\{F,G\}_{A}=\int \fd{F}{u^i}A^{ij}\fd{G}{u^j}\,dx
\end{equation*}%
if and only if, in the non degenerate case $\det (g^{ij})\neq 0$, $g^{ij}$
is symmetric, $\Gamma _{k}^{ij}=-g^{is}\Gamma _{sk}^{j}$ are Christoffel
symbols of the Levi--Civita connection of $g_{ij}$ (the inverse of $g^{ij}$%
), and the tensor $g^{ij}(\mathbf{u})$ is a flat contravariant metric.

The theory of compatible pairs of such Hamiltonian operators was developed
later in a series of publications (see the review paper \cite%
{mokhov98:_sympl_poiss}). We recall that two Hamiltonian operators $A$, $B$,
are said to be compatible if $A+\lambda B$ is a Hamiltonian operator for
every $\lambda \in \mathbb{R}$. The main application of compatible pairs $%
A_{0}$, $B_{0}$ of homogeneous Hamiltonian operators of the type \eqref{eq:2}
is the integrability of the corresponding quasilinear systems of PDEs of the
form 
\begin{equation*}
u_{t}^{i}=V_{j}^{i}(\mathbf{u})u_{x}^{j},
\end{equation*}%
which is provided by Magri's Theorem \cite{Magri:SMInHEq} when the above
system of PDEs is \emph{bi-Hamiltonian}: 
\begin{equation}
u_{t}^{i}=V_{j}^{i}(\mathbf{u})u_{x}^{j}=A_{0}^{ij}\frac{\delta \mathbf{H}%
_{0}}{\delta u^{j}}=B_{0}^{ij}\frac{\delta \mathbf{\tilde{H}}_{0}}{\delta
u^{j}},  \label{eq:4}
\end{equation}%
where $B_{0}^{ij}=\tilde{g}^{ij}(\mathbf{u})\partial _{x}+\tilde{\Gamma}%
_{k}^{ij}(\mathbf{u})u_{x}^{k}$ and 
\begin{equation*}
\mathbf{H}_{0}=\int h_{0}(\mathbf{u})dx,\text{ \ \ }\mathbf{\tilde{H}}%
_{0}=\int \tilde{h}_{0}(\mathbf{u})dx,
\end{equation*}%
where $h_{0}(\mathbf{u})$ and $\tilde{h}_{0}(\mathbf{u})$ are hydrodynamic
conservation law densities.

In some cases, bi-Hamiltonian hydrodynamic type systems can be considered as
the dispersionless limit of integrable bi-Hamiltonian systems, containing
higher order derivatives:%
\begin{equation*}
u_{t}^{i}=A^{ij}\frac{\delta \mathbf{H}}{\delta u^{j}}=B^{ij}\frac{\delta 
\mathbf{\tilde{H}}}{\delta u^{j}},
\end{equation*}%
where%
\begin{equation*}
\mathbf{H}=\int h(\mathbf{u,u}_{x},\mathbf{u}_{xx},...)dx,\text{ \ \ }%
\mathbf{\tilde{H}}=\int \tilde{h}(\mathbf{u,u}_{x},\mathbf{u}_{xx},...)dx
\end{equation*}%
and $A^{ij},B^{ij}$ are non-homogeneous differential operators of arbitrary
orders. In the dispersionless limit
($\partial _{x}\rightarrow \epsilon \partial _{x}$,
$\partial _{t}\rightarrow \epsilon \partial _{t},$ $\epsilon \rightarrow 0$) we
have $A^{ij}\rightarrow A_{0}^{ij},B^{ij}\rightarrow B_{0}^{ij}$ and
$h(\mathbf{u,u}_{x},\mathbf{u}_{xx},\ldots)\rightarrow h(\mathbf{u}),$ $%
\tilde{h}(\mathbf{u,u}_{x},\mathbf{u}_{xx},\ldots)\rightarrow \tilde{h}(%
\mathbf{u})$.

A large number of examples of the above bi-Hamiltonian systems are known.
They can be generated from solutions of the WDVV equations as Frobenius
manifolds, for example \cite{dubrovin98:_flat_froben}.

In our paper, we consider an alternative class of bi-Hamiltonian systems.
According with an observation made by B.A. Dubrovin and S.P. Novikov, one can
introduce the so-called flat coordinates $a^{k}(\mathbf{u})$ such that the
first local Hamiltonian operator $A_{0}^{ij}$ takes the constant form,
i.e. $A_0^{ij} = \eta ^{ij}\partial _{x}$, where $\eta ^{ij}$ is a symmetric
constant non degenerate matrix. In this case, the flat coordinates $a^{k}$ play
the role of \textquotedblleft Liouville coordinates\textquotedblright\ for the
second local Hamiltonian operator $B_{0}^{ij}$, \emph{i.e.} we have
$B_{0}^{ij}=\tilde{\Gamma}^{ji}\partial _{x}+\partial _{x}\tilde{\Gamma}^{ij}$,
where $\tilde{\Gamma}%
^{ij}$ are a set of functions such that the second metric tensor is given by $%
\tilde{g}^{ij}(\mathbf{a})=\tilde{\Gamma}^{ij}(\mathbf{a}) +\tilde{\Gamma}%
^{ji}(\mathbf{a})$, and $(\tilde{\Gamma}^{ij})_{,k}\equiv \tilde{\Gamma}%
_{k}^{ij} = - \tilde{g}^{is}\tilde{\Gamma}_{sk}^{j}$. Here we denote $(%
\tilde{\Gamma}^{ij})_{,k}\equiv \partial \tilde{\Gamma}^{ij}/\partial
a^{k}$. So, we can rewrite the bi-Hamiltonian hydrodynamic type
system~\eqref{eq:4} as
\begin{equation*}
  a_{t}^{i}=\eta ^{ij}\partial_x\frac{\delta \mathbf{H}}{\delta a^{j}} =
  (\tilde{\Gamma}^{ji}\partial _{x}+\partial _{x}\tilde{\Gamma}^{ij})
  \frac{\delta \mathbf{\tilde{H}}}{\delta a^{j}}.
\end{equation*}

It is well-known that the class of local first-order homogeneous Hamiltonian
operators can be extended to include non-local terms. A widely studied
extension is that of Mokhov--Ferapontov and Ferapontov operators (see \cite%
{F95:_nl_ho} and references therein), which is again homogeneous. Later E.V.
Ferapontov introduced and studied (see \cite{ferapontov92:_non_hamil}) a 
\emph{non-homogeneous, non-local} extension of the form
\begin{equation}
B=\tilde{g}^{ij}\partial _{x}+\tilde{\Gamma}_{k}^{ij}u_{x}^{k}+\epsilon
f^{i}\partial _{x}^{-1}f^{j},  \label{eq:5}
\end{equation}%
where $\epsilon $ is a parameter and $(f^{j})$ is a vector field, $%
f^{j}=f^{j}(\mathbf{u})$, that is an infinitesimal isometry of $\tilde{g}%
^{ij}$. All these Hamiltonian operators are applicable for integrable as
well as for non-integrable systems. In our paper we deal with bi-Hamiltonian
structures, which is a significant part in theory of integrable systems. Our
motivating example is given by the constant astigmatism equation 
\begin{equation}
u_{tt}+\left( \frac{1}{u}\right) _{xx}+2=0,  \label{eq:6}
\end{equation}%
whose bi-Hamiltonian structure, after introducing the variable $u_{t}=v_{x}$
and rewriting the equation as the non-homogeneous quasilinear system 
\begin{equation}
u_{t}=v_{x},\qquad v_{t}=-\left( \frac{1}{u}\right) _{x}-2x,  \label{eq:8}
\end{equation}%
was found in \cite{m.v.13:_lagran_hamil} to be 
\begin{gather}
A=%
\begin{pmatrix}
0 & 1 \\ 
1 & 0%
\end{pmatrix}%
\partial _{x}, \\\label{eeqq:1}
B=%
\begin{pmatrix}
2u & 0 \\ 
0 & \frac{2}{u}%
\end{pmatrix}%
\partial _{x}+%
\begin{pmatrix}
1 & 0 \\ 
0 & -\frac{1}{u^{2}}%
\end{pmatrix}%
u_{x}+%
\begin{pmatrix}
0 & -1 \\ 
1 & 0%
\end{pmatrix}%
v_{x}+%
\begin{pmatrix}
0 & 0 \\ 
0 & 2%
\end{pmatrix}%
\partial _{x}^{-1}.
\end{gather}

In this paper we investigate the following problem: \emph{find all
bi-Hamiltonian pairs $A$, $B$ where} 
\begin{equation}  \label{eq:9}
A= 
\begin{pmatrix}
0 & 1 \\ 
1 & 0%
\end{pmatrix}%
\partial_x, \quad B^{ij}=\Gamma ^{ji}\partial _{x}+\partial _{x}\Gamma
^{ij}+\epsilon f^{i}\partial _{x}^{-1}f^{j},
\end{equation}
where $(f^i)$ is an infinitesimal isometry of $g^{ij}=\Gamma^{ji}+\Gamma ^{ij}$
(here $\epsilon $ is an arbitrary parameter). The classification is made with
respect to the action of the group of local diffeomorphisms of the dependent
variables. We observe that isometries of the leading coefficients have been
used in a different classification problem of pairs of compatible \emph{local}
first-order homogeneous operators $A_0$, $B_0$ in \cite{MF2001}.

We give a complete solution to the above problem in the case of $n=2$ dependent
variables. The solution of the system of conditions in the case $%
n=3$ is more complicated and will be dealt with in the future. However, we
provide here an interesting new example: the ``isometric'' extension of a
hydrodynamic type system obtained as a solution of the WDVV equations.

\bigskip

\textbf{Acknowledgements.} We thank E.V. Ferapontov for useful discussions.
MVP was partially supported by the RFBR grant: 18-01-00411. PV and RV
acknowledge the financial support of Dipartimento di Matematica e Fisica `E.
De Giorgi' of the Universit\`a del Salento, GNFM of the Istituto Nazionale
di Alta Matematica \url{http://www.altamatematica.it} and the
financial support of Istituto Nazionale di Fisica Nucleare through its IS
`Mathematical Methods in Non-Linear Physics'
\url{https://web.infn.it/CSN4/index.php/it/17-esperimenti/165-MMNLP-home}.

\section{Preliminaries}

Let us consider the homogeneous operator of the first order: 
\begin{equation}  \label{op1}
A_0^{ij}=g^{ij}\partial_x + \Gamma^{ij}_ku^k_x.
\end{equation}
It is well-known \cite{DN83} that the conditions necessary and sufficient
for $A_0$ to be Hamiltonian are, in the non degenerate case $%
\det(g^{ij})\neq 0$\footnote{%
From now on all operators are assumed to have a non degenerate leading term.}%
, that $g^{ij}$ is symmetric, its inverse $(g_{ij})$ is a flat
pseudo-Riemannian metric and $\Gamma^i_{jk} = - g_{jp}\Gamma^{ip}_k$ are the
Christoffel symbols of the Levi-Civita connection of $g_{ij}$.

Let us consider two first-order homogeneous Hamiltonian operators $A_{0}$
and $B_{0}=\tilde{g}^{ij}\partial _{x}+\tilde{\Gamma}_{k}^{ij}u_{x}^{k}$.
The pair of metrics $g$ and $\tilde{g}$ is said to be \emph{almost compatible%
} if for every linear combination $g_{\lambda }:=g+\lambda \tilde{g}$ we
have 
\begin{equation}
\Gamma _{\lambda ,k}^{ij}=\Gamma _{k}^{ij}+\lambda \tilde{\Gamma}_{k}^{ij};
\label{gam1}
\end{equation}%
the pair of metrics $g$ and $\tilde{g}$ is said to be \emph{compatible} if
and only if, in addition to~\eqref{gam1}, the Riemann curvature tensor $%
R_{\lambda }$ of the metric $g+\lambda \tilde{g}$ splits as the sum of the
Riemann curvature tensors $R$ of $g$ and $\tilde{R}$ of $\tilde{g}$: 
\begin{equation*}
R_{\lambda ,kl}^{ij}=R_{kl}^{ij}+\lambda \tilde{R}_{kl}^{ij}.
\end{equation*}%
It can be proved that the Hamiltonian operators $A_{0}$, $B_{0}$ are
compatible, \emph{i.e.} $A_{0}+\lambda B_{0}$ is a Hamiltonian operator for
every $\lambda $, if and only if the corresponding metrics are compatible 
\cite{mokhov17:_pencil}. In this case we say $A_{0}$, $B_{0}$ to be a \emph{%
bi-Hamiltonian pair}.

Now, let us consider a non-local non-homogeneous operator of the form%
\begin{equation*}
B^{ij}=g^{ij}\partial _{x}+\Gamma _{k}^{ij}u_{x}^{k}+cu_{x}^{i}\partial
_{x}^{-1}u_{x}^{j}+\epsilon f^{i}\partial _{x}^{-1}f^{j},
\end{equation*}%
where $c$, $\epsilon $ are constants and $(f^{i})$ is a vector field, $%
f^{i}=f^{i}(\mathbf{u})$. In \cite{ferapontov92:_non_hamil} it is shown that 
$B$ defines a Poisson bracket if and only if the following conditions are
satisfied:

\begin{enumerate}
\item $g_{ij}$ is a pseudo-Riemannian metric and $g_{ij}$ is compatible with
the connection with Christoffel symbols $\Gamma^{i}_{jk} = -
g_{jp}\Gamma^{ip}_k$;

\item the connection $\Gamma^i_{jk}$ is symmetric and it has constant
curvature $c$;

\item $(f^i)$ is an infinitesimal isometry of $g_{ij}$, or, equivalently, $%
\nabla^if^j+\nabla^jf^i=0$;

\item the cyclic condition 
\begin{equation*}
f^{j}\nabla ^{i}f^{k}+<\text{cyclic}>=0
\end{equation*}%
is fulfilled.
\end{enumerate}

It is clear that the last condition is trivially satisfied for $2$%
-dimensional spaces.

\begin{remark}
\label{sec:preliminaries} We observe that if we require $c=0$ then the
operator $B$ is of the form 
\begin{equation*}
B=B_{0}+\epsilon f^{i}\partial _{x}^{-1}f^{j},
\end{equation*}%
where $B_{0}$ is a local homogeneous first-order Hamiltonian operator.
\end{remark}

Motivated by the example of the constant astigmatism equation~\eqref{eq:6},
in the two-component case ($n=2$) we consider pairs of Hamiltonian operators 
$A$, $B$, where 
\begin{subequations}
\label{eq:10}
\begin{gather}
A= \eta^{ij}\partial_x = 
\begin{pmatrix}
0 & 1 \\ 
1 & 0%
\end{pmatrix}%
\partial_x, \\
B^{ij}=B_0 + f^{i}\partial_{x}^{-1}f^{j} = g^{ij}\partial_x + \Gamma ^{ij}_k
u^k_x + f^{i}\partial_{x}^{-1}f^{j}.
\end{gather}
\end{subequations}
The compatibility conditions of the above operators are given in the
following theorem.

\begin{theorem}\label{th:1}
The above Hamiltonian operators $A$, $B$ form a bi-Hamiltonian pair if and
only if

\begin{enumerate}
\item $A$ is compatible with $B_0$;

\item $(f^i)$ is an isometry of the leading terms of both operators $A$ and $%
B$.
\end{enumerate}
\end{theorem}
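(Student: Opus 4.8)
The plan is to characterize when $A + \lambda B$ is Hamiltonian for all $\lambda$ by expanding the Poisson bracket bilinearity and separating the local from the non-local pieces. Write $A + \lambda B = (A + \lambda B_0) + \lambda\, f^i \partial_x^{-1} f^j$. The operator $A + \lambda B$ is of the Ferapontov type \eqref{eq:5} (with the $c u_x^i \partial_x^{-1} u_x^j$ term absent, i.e. $c=0$), so I can apply the four conditions recalled in the Preliminaries, but now to the pencil metric $\eta^{ij} + \lambda g^{ij}$, its associated connection, the isometry vector $(\sqrt{\lambda}\, f^i)$, and with curvature constant equal to $0$. Condition (4), the cyclic condition, is automatic in dimension $n=2$, so it plays no role here.

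First I would observe that the $\lambda$-dependence organizes the four conditions into powers of $\lambda$. Condition (2) for the pencil says the connection of $\eta + \lambda g$ must have constant curvature $c_\lambda = 0$ for every $\lambda$; since $\eta\,\partial_x$ alone (that is $\lambda = 0$) is already flat and Hamiltonian, and $B$ is Hamiltonian (so $B_0$ has a connection of constant curvature $c = 0$ by Remark \ref{sec:preliminaries}, as the non-local part is precisely $\epsilon f^i\partial_x^{-1}f^j$ with no $u_x^i\partial_x^{-1}u_x^j$ term), the content of condition (2) across the pencil is exactly that $\eta$ and $g$ are \emph{compatible} as flat metrics in the Dubrovin–Novikov sense — which is precisely item (1) of the theorem, "$A$ compatible with $B_0$." Here I would be careful to invoke the standard fact (from \cite{mokhov17:_pencil}, stated in the Preliminaries) that compatibility of the pair of flat metrics $\eta$, $g$ is equivalent to $A + \lambda B_0$ being a Hamiltonian operator for all $\lambda$, so that conditions (1) and (2) of Ferapontov's list, read off the pencil, collapse to this single requirement plus the already-known flatness of the two endpoints.

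Next I would handle conditions (3) and (4) of Ferapontov's list for the pencil. The Killing equation $\nabla^i_\lambda f^j + \nabla^j_\lambda f^i = 0$ for the connection of $\eta + \lambda g$ is affine in $\lambda$ once one notes $\Gamma_{\lambda,k}^{ij} = \Gamma^{ij}_{\eta,k} + \lambda \Gamma^{ij}_{k}$ from almost-compatibility \eqref{gam1}; hence it splits into the $\lambda^0$ part "$(f^i)$ is a Killing vector for the $\eta$-connection, i.e. an isometry of the leading term of $A$" and the $\lambda^1$ part "$(f^i)$ is a Killing vector for $g$, i.e. an isometry of the leading term of $B$." Together these are item (2) of the theorem. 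The cyclic condition (4) is vacuous for $n=2$, so nothing further is imposed. Conversely, if items (1) and (2) of the theorem hold, then for each fixed $\lambda \ne 0$ all four Ferapontov conditions are verified for $A + \lambda B$ (rescaling $f \mapsto \sqrt{\lambda} f$ absorbs the constant in front of the non-local tail, and the case $\lambda = 0$ is $A$ itself, which is Hamiltonian), giving that $A + \lambda B$ is Hamiltonian for all $\lambda$, i.e. $A$, $B$ form a bi-Hamiltonian pair.

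The main obstacle I anticipate is bookkeeping rather than conceptual: making sure that the "constant curvature $c$" in Ferapontov's condition (2), applied to the pencil, really forces $c = 0$ for all $\lambda$ and not merely a $\lambda$-dependent constant — this is where one needs that the non-local part of $B$ contains \emph{no} $u_x^i\partial_x^{-1}u_x^j$ term (so $c=0$ for $B$) together with flatness of $\eta$; and then that "$c_\lambda = 0$ for all $\lambda$" together with almost-compatibility \eqref{gam1} is exactly the Dubrovin–Novikov compatibility of the flat pencil. A secondary point to check carefully is that the isometry condition is imposed with respect to the correct metrics: the problem statement in \eqref{eq:9}–\eqref{eq:10} requires $(f^i)$ to be an isometry of $g^{ij} = \Gamma^{ji} + \Gamma^{ij}$, the leading coefficient of $B$, so one must verify that the $\lambda^0$ piece of the pencil Killing equation indeed recovers the isometry condition for the \emph{constant} metric $\eta$ (which, since $\eta$ is constant, just says $\partial^i f^j + \partial^j f^i = 0$ with indices raised by $\eta$) and not something weaker.
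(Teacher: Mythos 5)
Your proposal is correct and follows essentially the same route as the paper's own proof: decompose $\lambda A+B$ into the local pencil $(\lambda\eta^{ij}+g^{ij})\partial_x+\Gamma^{ij}_ku^k_x$ plus the non-local tail, and require the Hamiltonian conditions for every $\lambda$, so that flatness of the pencil gives compatibility of $A$ and $B_0$ while the Killing equation for the pencil metric splits by degree in $\lambda$ into the two isometry conditions. The paper states this much more tersely; your explicit appeal to Ferapontov's four conditions (with the cyclic condition vacuous for $n=2$) is just the detailed version of the same argument.
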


\begin{proof}
  Indeed, $A$ is clearly a Hamiltonian operator. Moreover, the operator
  \begin{equation}
    \label{eq:15}
    \lambda A + B = (\lambda\eta^{ij} + g^{ij})\partial_x + \Gamma ^{ij}_k
    u^k_x +  f^{i}\partial_{x}^{-1}f^{j}
  \end{equation}
  should be Hamiltonian for every $\lambda$.  This is equivalent to the
  requirement that the operator
  \begin{equation}
    \label{eq:16}
    (\lambda\eta^{ij} + g^{ij})\partial_x + \Gamma ^{ij}_ku^k_x
  \end{equation}
  is Hamiltonian for  every $\lambda$, which is the first condition of the
  above statement. The second condition comes from the fact that $(f^i)$ must
  be an isometry of the leading metric coefficient $\lambda\eta^{ij} + g^{ij}$
  for every $\lambda\in\mathbb{R}$.
\end{proof}It is easy to realize that the Hamiltonian operators $A$ and $%
B_{0}$ are compatible if and only if 
\begin{subequations}\label{eq:13}
\begin{gather}\label{eq:11}
  \Gamma _{k}^{ij}=\Gamma _{\lambda ,k}^{ij},
  \\\label{eq:12}
  R_{kl}^{ij}=0=R_{\lambda ,kl}^{ij},
\end{gather}
\end{subequations}
where $g_{\lambda }^{ij}=g^{ij}+\lambda \eta ^{ij}$ and $\Gamma _{k}^{ij}$, $%
R_{kl}^{ij}$ are the Christoffel symbols and the Riemannian curvature tensor
of $g$, respectively. These are the equations that will be used in order to
produce a classification of the pairs $A$, $B$ when $n=2$.

\section{The case \texorpdfstring{$n=2$}{n=2}:
  classification}

\label{sec:case-n=2:-class}

In the case $n=2$ we can provide a classification of the bi-Hamiltonian
pairs $A$, $B$ as in \eqref{eq:9} or \eqref{eq:10}.
We denote the dependent variables by $u$ and $v$.

We stress that working with a pair of the form \eqref{eq:9} means that we used
flat coordinates of the first operator in order to obtain it in the form
$A^{ij}=\eta^{ij}\partial_x$, where $(\eta^{ij})$ is a constant non degenerate
symmetric matrix. Then, by linear transformations of the dependent variables,
we further reduced $(\eta^{ij})$ to the `antidiagonal identity' form in
\eqref{eq:9}.  The only remaining coordinate freedom consists in translations
and scalings, and they will be used to reduce the number of parameters in the
canonical forms.

An immediate observation is that the vector field $f=(f^i)$ must be a linear
combination of the following isometries of $\eta^{ij}$:

\begin{enumerate}
\item $f_1=\partial_u$,

\item $f_2=\partial_v$,

\item $f_3=u\partial_u-v\partial_v$,
\end{enumerate}

so that $f=a_1f_1+a_2f_2+a_3f_3$. If $a_3\neq 0$, then by translating $u$
and $v$ (this will preserve $\eta^{ij}$) we can reduce to the case $%
f=u\partial_u-v\partial_v$. Otherwise, if $a_3= 0$ then by complex scaling $%
u\mapsto cu$, $v\mapsto \frac{v}{c}$ we can transform the isometry $f$ to $%
f=\partial_u+\partial_v$, or $f=\partial_u$. This shows that a complete
classification of compatible pairs $A$ and $B$ (up to transformations
preserving $\eta$) reduces to the three distint cases

\begin{enumerate}
\item $f=\partial_u$,

\item $f=\partial_u+\partial_v$,

\item $f=u\partial_u-v\partial_v$.
\end{enumerate}

Our strategy is the following: we find the most general form of metric
leading coefficient $g$ of $B$ for which the selected vector field $f$ is an
isometry, then we apply Theorem~\ref{th:1}. We will check the compatibility
of the local operators $A$ and $B_0$ by the compatibility of the
corresponding metric leading coefficients.

\subsection{Case \texorpdfstring{$f=\partial_u$}{f=partial u}}

In this case the metric can be written as $g^{ij}=g^{ij}(v)$.

\begin{theorem}
\label{case2} If $f=\partial _{u}$, the metric $g^{ij}$ in the operator $B$
is one of the following: 
\begin{align*}
& g_{1}^{ij}=%
\begin{pmatrix}
\frac{\alpha }{v} & \beta \\ 
\beta & v%
\end{pmatrix}%
\qquad \alpha \neq \beta ^{2}, \\
& g_{2}^{ij}=%
\begin{pmatrix}
g^{11}(v) & g^{12}(v) \\ 
g^{12}(v) & 0%
\end{pmatrix}%
\qquad g^{12}(v)\neq 0, \\
& g_{3}^{ij}=%
\begin{pmatrix}
0 & \beta \\ 
\beta & v%
\end{pmatrix}%
\qquad \beta \neq 0,
\end{align*}
where $g^{11}(v)$ and $g^{12}(v)$ are arbitrary functions.
\end{theorem}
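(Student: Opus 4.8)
The plan is to combine the isometry condition with the flatness and almost-compatibility conditions from Theorem~\ref{th:1}, specialized to $n=2$ and $f=\partial_u$. The first step is to exploit the isometry condition: the Lie derivative $\mathcal{L}_f g^{ij}=0$ with $f=\partial_u$ forces $\partial_u g^{ij}=0$, so $g^{ij}=g^{ij}(v)$ depends only on $v$, as already noted. Write $g^{11}=p(v)$, $g^{12}=g^{21}=q(v)$, $g^{22}=r(v)$. Since all metric coefficients depend on $v$ alone, the Christoffel symbols $\Gamma^{ij}_k$ and the curvature are computed from ordinary derivatives in $v$; this drastically simplifies the system \eqref{eq:13}.

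Next I would impose the two halves of the compatibility of $A$ and $B_0$ with $g_\lambda^{ij}=g^{ij}+\lambda\eta^{ij}$. Condition \eqref{eq:11}, $\Gamma^{ij}_k=\Gamma^{ij}_{\lambda,k}$, is a set of polynomial identities in $\lambda$: expanding and collecting powers of $\lambda$ gives differential constraints on $p,q,r$. Condition \eqref{eq:12} requires both $g$ and $g_\lambda$ to be flat; in $n=2$ flatness is a single scalar equation (vanishing of the Gaussian curvature, equivalently a single component $R^{12}_{12}=0$), and again one expands in $\lambda$. Together these give a small overdetermined ODE system for $p(v),q(v),r(v)$. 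I expect the analysis to branch according to whether $r(v)\equiv 0$ or not, and, when $r\not\equiv0$, according to whether $q$ is constant, which is where the three cases $g_1,g_2,g_3$ will emerge. In the generic branch one integrates to find $r(v)=v$ (after using a translation/scaling of $v$, the residual coordinate freedom preserving $\eta$) and $q=\beta$ constant, $p=\alpha/v$, with the nondegeneracy $\det g^{ij}=\alpha-\beta^2\neq0$ giving $g_1$; the degenerate sub-branches $r\equiv0$ and $\alpha=0$ (forcing $\beta\neq0$ for nondegeneracy) give $g_2$ and $g_3$ respectively.

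The main obstacle I anticipate is organizing the case split cleanly and making sure the residual coordinate freedom (translations $v\mapsto v+c$, scalings $u\mapsto cu$, $v\mapsto v/c$, and $u\mapsto u+\psi(v)$ type shifts that preserve both $\eta^{ij}$ and the form $f=\partial_u$) is used consistently to normalize the integration constants, so that the list $g_1,g_2,g_3$ is genuinely irredundant and exhaustive. A secondary technical point is handling the polynomial-in-$\lambda$ conditions: one must be careful that almost-compatibility \eqref{eq:11} already kills the $\lambda$-dependent Christoffel terms before imposing flatness, otherwise the curvature expansion produces spurious cross terms. Once the ODE system is reduced, the integrations themselves are elementary; the content is entirely in the bookkeeping of the branches and the normalizations.
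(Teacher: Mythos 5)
Your proposal follows exactly the strategy the paper itself describes (and does not write out): restrict to $g^{ij}=g^{ij}(v)$ via the isometry condition, impose \eqref{eq:11}--\eqref{eq:12} as polynomial identities in $\lambda$, and split on whether $g^{22}$ vanishes, normalizing the surviving constants by the translations and scalings that preserve $\eta^{ij}$. The only slip is the parenthetical suggestion that shifts $u\mapsto u+\psi(v)$ are part of the residual freedom --- such maps destroy the constant antidiagonal form of $\eta^{ij}$ unless $\psi$ is constant --- but since you do not actually rely on them for the normalizations, the argument stands as the same computation the paper performs.
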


Obviously, a similar statement holds for $f=\partial_v$ by simple change of
variable.

\subsection{Case \texorpdfstring{$f=\partial_u+\partial_v$}{f=partial u + partial v}}

The solutions of the conditions are presented in the following Theorem.

\begin{theorem}
If $f=\partial_u+\partial_v$ then $g^{ij}$ is one of the following

\begin{equation*}
g_{4}^{ij}=%
\begin{pmatrix}
f(-u+v) & -f(-u+v)+\beta  \\ 
-f(-u+v)+\beta  & f(-u+v)%
\end{pmatrix}%
\qquad \beta \neq 0,
\end{equation*}%
where $f=f(-u+v)$ is an arbitrary non-constant function; 
\begin{equation*}
g_{5}^{ij}=%
\begin{pmatrix}
-u+v & \beta  \\ 
\beta  & 0%
\end{pmatrix}%
\qquad \beta \neq 0,
\end{equation*}%
\begin{equation*}
g_{6}^{ij}=%
\begin{pmatrix}
\alpha  & \beta  \\ 
\beta  & \gamma 
\end{pmatrix}%
\qquad \alpha \gamma \neq \beta ^{2}.
\end{equation*}
\end{theorem}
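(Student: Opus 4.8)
The plan is to follow the strategy announced just before the statement: first determine the most general metric $g^{ij}$ for which $f=\partial_u+\partial_v$ is a Killing vector, then impose the compatibility conditions \eqref{eq:13} with $A=\eta^{ij}\partial_x$, and finally reduce the resulting family by the residual coordinate freedom. Since $f=\partial_u+\partial_v$, the flow of $f$ is the diagonal translation $(u,v)\mapsto(u+\epsilon,v+\epsilon)$, so the Killing condition $\mathcal{L}_f g^{ij}=0$ forces every component of $g^{ij}$ to depend only on the invariant combination $w:=v-u$. Thus I would write $g^{11}=p(w)$, $g^{12}=g^{21}=q(w)$, $g^{22}=r(w)$ with three unknown functions of one variable.

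Next I would impose condition \eqref{eq:11}, that the Christoffel symbols $\Gamma^{ij}_k$ of $g$ coincide with those of the pencil $g_\lambda=g+\lambda\eta$; concretely this means $\partial_k g^{ij}_\lambda = -g_\lambda^{is}\Gamma^j_{sk}-\ldots$ is consistent for all $\lambda$, equivalently that $\Gamma^{ij}_k$ is independent of $\lambda$. Because all the $\lambda$-dependence sits in the constant matrix $\eta$, differentiating the relation $\Gamma^{ij}_k=\tfrac12(\partial_k g^{ij}+g^{is}g^{jt}\partial_s g_{tk}$-type expression$)$ in $\lambda$ gives a first batch of ODEs linking $p,q,r$; in the two-component case these should already collapse the three functions down to essentially one free function plus constants, with the antidiagonal form of $\eta$ making the algebra tractable. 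Then I would impose the flatness conditions \eqref{eq:12}, $R^{ij}_{kl}=0=R^{ij}_{\lambda,kl}$: in two dimensions curvature is governed by a single scalar, so each is one scalar ODE in $w$. Solving the $\lambda=0$ flatness equation together with the constraint from the previous step, and then checking the $\lambda\neq0$ flatness equation as a genuinely new restriction, should produce the generic branch $g_6$ (constant metric, which is automatically flat and compatible with $\eta$) and the branch where $f$ acts with a nondegenerate ``linear'' dependence, giving $g^{11}=g^{22}=f(w)$, $g^{12}=-f(w)+\beta$ — one recognizes this as the combination where $g-\beta$(off-diagonal) is the rank-one tensor $f^i f^j$ up to sign, consistent with the $c=0$ non-local structure. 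The degenerate sub-branch $\det$-type vanishing of a leading submatrix yields $g_5$.

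The main obstacle I expect is the casewise analysis forced by division: at several points one must divide by quantities such as $f'(w)$, $\beta$, or the determinant of a $2\times2$ block, and each vanishing locus spawns a sub-case that has to be solved separately and then either matched to one of $g_4,g_5,g_6$ or discarded (e.g. because it makes $g$ degenerate, or because it is equivalent under coordinate changes to a case already listed, or because it actually corresponds to the $f=\partial_u$ situation in disguise). Keeping this bookkeeping honest — in particular showing the list is exhaustive and the three normal forms are pairwise inequivalent under the residual group of translations and scalings preserving both $\eta$ and the chosen $f$ — is the delicate part; the ODE integrations themselves are routine. I would organize the argument by first isolating the ``$g^{12}$ is a nonzero constant'' generic stratum (leading, after using \eqref{eq:13}, to $g_4$ and its constant specialization $g_6$), then the stratum where some diagonal entry is forced to vanish (giving $g_5$), and finally verify directly that each listed $g$ does satisfy all conditions of Theorem~\ref{th:1}, so the classification is both necessary and sufficient.
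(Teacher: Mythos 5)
Your proposal follows essentially the same route the paper itself prescribes (the paper states this theorem without a written-out proof, giving only the strategy paragraph before Section~\ref{sec:case-n=2:-class}): the translation invariance $\mathcal{L}_f g^{ij}=0$ reduces all entries to functions of $-u+v$, after which conditions \eqref{eq:13} are imposed and the resulting ODEs solved case by case. Your outline of the case structure and of the residual checks (non-degeneracy, exhaustiveness, sufficiency via Theorem~\ref{th:1}) is consistent with the stated normal forms $g_4$, $g_5$, $g_6$.
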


\subsection{Case \texorpdfstring{$f=u\partial_u-v\partial_v$}{f=rotation}}

Here the coefficients of $g^{ij}$ depend, in principle, on both variables $u$%
, $v$. Solving the conditions yields the following results.

\begin{theorem}
\label{case3} If $f=u\partial _{u}-v\partial _{v}$ the metric $g^{ij}$ is
one of the following 
\begin{align*}
& g_{7}^{ij}=%
\begin{pmatrix}
\frac{\alpha uv+\epsilon }{v^{2}} & \beta  \\ 
\beta  & 0%
\end{pmatrix}%
\qquad \beta \neq 0, \\
& g_{8}^{ij}=%
\begin{pmatrix}
\frac{u}{v} & \beta  \\ 
\beta  & \frac{\alpha v}{u}%
\end{pmatrix}%
\qquad \alpha \neq \beta ^{2}, \\
& g_{9}^{ij}=%
\begin{pmatrix}
\frac{\alpha }{F} & \beta  \\ 
\beta  & F%
\end{pmatrix}%
\qquad \alpha \neq \beta ^{2},
\end{align*}%
with%
\begin{equation}
F=\frac{\gamma uv+\epsilon +\sqrt{\left( \gamma ^{2}-4\alpha \right)
u^{2}v^{2}+2\gamma \epsilon uv+\epsilon ^{2}}}{2u^{2}}.  \label{eq:18}
\end{equation}
\end{theorem}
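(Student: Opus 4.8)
The plan is to impose, in order, the three structural conditions that define a bi-Hamiltonian pair of the type \eqref{eq:10}: first that $f=u\partial_u-v\partial_v$ be a Killing vector of $g_{ij}$ (the inverse of the unknown $g^{ij}$), which by item (2) of Theorem~\ref{th:1} already holds for $\eta^{ij}$ and must hold for $g^{ij}$; second that the metric $g$ be flat, i.e. $R_{kl}^{ij}=0$; and third that $g$ be compatible with $\eta^{ij}$ in the sense of \eqref{eq:13}, namely that $g+\lambda\eta$ be flat for all $\lambda$ with the Christoffel symbols adding linearly. First I would write down the general solution of the Killing equation $\nabla^i f^j+\nabla^j f^i=0$ for the given $f$. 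Passing to the invariant coordinate $w=uv$ (which is annihilated by $f$) and a coordinate along the flow, one finds that the components of $g^{ij}$ are forced into the shape $g^{11}=\varphi(uv)/v^2$, $g^{12}=\psi(uv)$, $g^{22}=\chi(uv)u^2/(\text{something})$ — more precisely the weight-graded ansatz $g^{11}=A(w)/v^2$, $g^{12}=B(w)$, $g^{22}=C(w)v^2/u^2$ type expressions — reducing three functions of two variables to three functions of the single variable $w=uv$.

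Next I would substitute this reduced ansatz into the flatness condition $R^{ij}_{kl}=0$. For $n=2$ flatness is a single scalar PDE (the vanishing of the Gaussian curvature of $g_{ij}$), which after the reduction becomes a second-order ODE in $w$ for the triple $(A,B,C)$, coupled with the relation coming from $\det(g^{ij})\neq 0$. Then I would impose the genuinely restrictive condition: flatness of $g+\lambda\eta$ for \emph{all} $\lambda$. Since $\eta^{ij}$ is the constant antidiagonal matrix, $g+\lambda\eta$ has the same off-diagonal shift $B(w)\mapsto B(w)+\lambda$, and requiring zero curvature identically in $\lambda$ yields a finite hierarchy of ODEs (the coefficients of each power of $\lambda$ must vanish separately). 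This polynomial-in-$\lambda$ system is overdetermined and collapses the arbitrary functions $A,B,C$ down to finitely many constants, producing the candidate normal forms. The appearance of the quadratic irrationality $F$ in $g_9$ is exactly the signature of solving such a quadratic constraint among the surviving constants; I expect $g_7$ to come from a degenerate branch where $g^{22}=0$ (so the quadratic degenerates to a linear relation, giving the rational $F=(\gamma uv+\epsilon)/\ldots$ collapsing to the explicit $g_7$ form), $g_8$ from the branch where both diagonal entries are nonzero with the square-root discriminant vanishing, and $g_9$ the generic branch.

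Finally I would use the residual coordinate freedom — by the remarks preceding the theorem, the only transformations preserving $\eta^{ij}$ and the chosen $f=u\partial_u-v\partial_v$ are the scalings $u\mapsto cu$, $v\mapsto v/c$ together with the swap $u\leftrightarrow v$ — to normalize redundant constants (e.g. setting a scale to $1$, as in the entry $u/v$ of $g_8$), arriving at the three listed canonical metrics with parameters $\alpha,\beta,\gamma,\epsilon$ subject to the stated nondegeneracy inequalities $\alpha\neq\beta^2$, $\beta\neq0$. The main obstacle I anticipate is organizing the $\lambda$-expansion of the curvature of $g+\lambda\eta$: a priori this is a curvature computation for a one-parameter family of metrics with functional coefficients, and keeping the branching (which functions vanish, which combinations of constants survive) under control is where the bookkeeping is heaviest; the cleanest route is probably to exploit almost-compatibility \eqref{eq:11} first, since $\Gamma^{ij}_{\lambda,k}=\Gamma^{ij}_k$ is a strong linear constraint that kills much of the $\lambda$-dependence before one ever touches the curvature, leaving only the flatness of $g$ and of $g+\eta$ (two values of $\lambda$) to check, by a standard pencil-of-metrics argument.
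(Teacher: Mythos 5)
Your plan coincides with the paper's (implicit) proof: the paper offers nothing beyond ``solving the conditions yields the following results'', i.e.\ exactly the sequence you describe --- reduce $g^{ij}$ via the Killing equation for $f=u\partial_u-v\partial_v$ to functions of the invariant $w=uv$ carrying fixed homogeneity weights, impose the compatibility conditions \eqref{eq:13} (with $R_\lambda$ collapsing because, $\eta$ being constant, almost-compatibility forces $\Gamma_\lambda=\Gamma$ and hence $R^{ij}_{\lambda,kl}=(g^{is}+\lambda\eta^{is})R^{j}_{skl}$, affine in $\lambda$), and finally normalize by the residual scalings $u\mapsto cu$, $v\mapsto v/c$. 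One concrete correction to your ansatz: the Killing equation gives $(\mathcal{L}_fg)^{22}=u\,\partial_ug^{22}-v\,\partial_vg^{22}+2g^{22}=0$, whose general solution is $g^{22}=v^{2}C(uv)$ and \emph{not} $C(uv)v^{2}/u^{2}$ --- the latter has weight $-4$ instead of $-2$ and the equation would force $C\equiv0$; with the correct weights all three canonical forms do fit, e.g.\ $g_8^{22}=\alpha v/u=v^{2}\cdot\alpha/(uv)$ and $g_9^{22}=F=v^{2}\,\bigl(\gamma w+\epsilon+\sqrt{(\gamma w+\epsilon)^{2}-4\alpha w^{2}}\bigr)/(2w^{2})$ with $w=uv$, confirming that $F$ arises as a root of the quadratic $u^{2}F^{2}-(\gamma uv+\epsilon)F+\alpha v^{2}=0$ exactly as you anticipate.
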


\section{Hierarchies in \texorpdfstring{$2$}{2} components}

\label{sec:hier-const-astigm}

It is well-known that a bi-Hamiltonian pair $A$, $B$ defines a sequence of
Poisson commuting conserved quantities $H_{k}$ by means of Magri's recursion
\cite{Magri:SMInHEq}:
\begin{equation}
A^{ij}\frac{\delta H_{k+1}}{\delta u^{j}}=B^{ij}\frac{\delta H_{k}}{\delta
u^{j}}.
\end{equation}
where $H_{1}$, $H_{2}$ are densities of conservation laws for the
bi-Hamiltonian system of PDEs associated with the pair. The first density is
usually taken as a Casimir of one of the operators, \emph{i.e.} a density that
is in the kernel of one of the operators. The sequence of systems of PDEs
\begin{equation}\label{eq:7}
u^i_{t^k} = A^{ij}\fd{H_k}{u^j}
\end{equation}
is then the \emph{integrable hierarchy}; the proper integrable
system is identified as the first non-trivial system in the hierarchy.

It is possible to give a more explicit construction of the integrable
hierarchies generated by a bi-Hamiltonian pair using recursion operators
defined as $R=B\circ A^{-1}$.  This implies that there is no more need to solve
Magri's recursion. In our case it is possible to give an explicit form of
recursion operators associated with the bi-Hamiltonian pairs that we found in
our classification in the case $n=2$. In the following Section we will show by
means of an example how to construct the integrable system which is the
starting element of the integrable hierarchy.
\begin{enumerate}
\item Case $g_{1}^{ij}$. We have ($\alpha $ is an arbitrary constant) 
\begin{equation*}
B=%
\begin{pmatrix}
\frac{\alpha }{v}\partial _{x}-\frac{\alpha }{2v^{2}}v_{x}+\partial _{x}^{-1}
& \beta \partial _{x}+\frac{1}{2}u_{x} \\ 
\beta \partial _{x}-\frac{1}{2}u_{x} & v\partial _{x}+\frac{1}{2}v_{x}%
\end{pmatrix}%
\end{equation*}%
and the recursion operator is 
\begin{equation*}
R=%
\begin{pmatrix}
\beta \partial _{x}+\frac{1}{2}u_{x} & \frac{\alpha }{v}\partial _{x}-\frac{%
\alpha }{2v^{2}}v_{x}+\partial _{x}^{-1} \\ 
v\partial _{x}+\frac{1}{2}v_{x} & \beta \partial _{x}-\frac{1}{2}u_{x}%
\end{pmatrix}%
\partial _{x}^{-1}
\end{equation*}

\item Case $g_{2}^{ij}$. We have (functions $g^{11}(v)$ and $g^{12}(v)$ are
arbitrary): 
\begin{equation*}
B=%
\begin{pmatrix}
g^{11}(v)\partial _{x}+\frac{1}{2}\frac{d}{dv}g^{11}(v)v_{x}+\partial
_{x}^{-1} & g^{12}(v)\partial _{x}+\frac{d}{dv}g^{12}(v)v_{x} \\ 
g^{12}(v)\partial _{x} & 0%
\end{pmatrix}%
\end{equation*}%
then, the recursion operator is 
\begin{equation*}
R=%
\begin{pmatrix}
g^{12}(v)\partial _{x}+\frac{d}{dv}g^{12}(v)v_{x} & g^{11}(v)\partial _{x}+%
\frac{1}{2}\frac{d}{dv}g^{11}(v)v_{x}+\partial _{x}^{-1} \\ 
0 & g^{12}(v)\partial _{x}%
\end{pmatrix}%
\end{equation*}

\item Case $g_{3}^{ij}$. We have 
\begin{equation*}
B=%
\begin{pmatrix}
\partial _{x}^{-1} & \beta \partial _{x}+\frac{1}{2}u_{x} \\ 
\beta \partial _{x}-\frac{1}{2}u_{x} & v\partial _{x}+\frac{1}{2}v_{x}%
\end{pmatrix}%
\end{equation*}%
and the recursion operator 
\begin{equation*}
R=%
\begin{pmatrix}
\beta \partial _{x}+\frac{1}{2}u_{x} & \partial _{x}^{-1} \\ 
v\partial _{x}+\frac{1}{2}v_{x} & \beta \partial _{x}-\frac{1}{2}u_{x}%
\end{pmatrix}%
\partial _{x}^{-1}
\end{equation*}

\item Case $g_{4}^{ij}$ (the function $f(\gamma )$ is arbitrary). For
simplicity, let us substitute $\gamma :=-u+v$, then the operator is 
\begin{equation*}
B=%
\begin{pmatrix}
f(\gamma )\partial _{x}+\frac{f^{\prime }(\gamma )}{2}(v_{x}-u_{x})+\partial
_{x}^{-1} & (-f(\gamma )+c_{1})\partial _{x}+\frac{f^{\prime }(\gamma )}{2}%
(u_{x}-v_{x})+\partial _{x}^{-1} \\ 
(-f(\gamma )+c_{1})\partial _{x}+\frac{f^{\prime }(\gamma )}{2}%
(u_{x}-v_{x})+\partial _{x}^{-1} & f(\gamma )\partial _{x}+\frac{f(\gamma )}{%
2}(v_{x}-u_{x})+\partial _{x}^{-1}%
\end{pmatrix}%
\end{equation*}%
and the recursion operator is 
\begin{equation*}
R=%
\begin{pmatrix}
(-f(\gamma )+c_{1})\partial _{x}+\frac{f^{\prime }(\gamma )}{2}%
(u_{x}-v_{x})+\partial _{x}^{-1} & f(\gamma )\partial _{x}+\frac{f^{\prime
}(\gamma )}{2}(v_{x}-u_{x})+\partial _{x}^{-1} \\ 
f(\gamma )\partial _{x}+\frac{f(\gamma )}{2}(v_{x}-u_{x})+\partial _{x}^{-1}
& (-f(\gamma )+c_{1})\partial _{x}+\frac{f^{\prime }(\gamma )}{2}%
(u_{x}-v_{x})+\partial _{x}^{-1}%
\end{pmatrix}%
\partial _{x}^{-1}
\end{equation*}

\item Case $g_{5}^{ij}$. The operator is 
\begin{equation*}
B=%
\begin{pmatrix}
(-u+v)\partial _{x}-\frac{1}{2}u_{x}+\frac{1}{2}v_{x}+\partial _{x}^{-1} & 
\beta \partial _{x}+\frac{1}{2}v_{x}+\partial _{x}^{-1} \\ 
\beta \partial _{x}-\frac{1}{2}v_{x}+\partial _{x}^{-1} & \partial _{x}^{-1}%
\end{pmatrix}%
\end{equation*}%
and the recursion operator is 
\begin{equation*}
R=%
\begin{pmatrix}
\beta \partial _{x}+\frac{1}{2}v_{x}+\partial _{x}^{-1} & (-u+v)\partial
_{x}-\frac{1}{2}u_{x}+\frac{1}{2}v_{x}+\partial _{x}^{-1} \\ 
\partial _{x}^{-1} & \beta \partial _{x}-\frac{1}{2}v_{x}+\partial _{x}^{-1}%
\end{pmatrix}%
\partial _{x}^{-1}
\end{equation*}

\item Case $g_{6}^{ij}$. The operator is ($\beta $ and $\gamma $ are
arbitrary constants) 
\begin{equation*}
B=%
\begin{pmatrix}
\alpha \partial _{x}+\partial _{x}^{-1} & \beta \partial _{x}+\partial
_{x}^{-1} \\ 
\beta \partial _{x}+\partial _{x}^{-1} & \gamma \partial _{x}+\partial
_{x}^{-1}%
\end{pmatrix}%
\end{equation*}%
and the recursion operator is 
\begin{equation*}
R=%
\begin{pmatrix}
\beta \partial _{x}+\partial _{x}^{-1} & \alpha \partial _{x}+\partial
_{x}^{-1} \\ 
\gamma \partial _{x}+\partial _{x}^{-1} & \beta \partial _{x}+\partial
_{x}^{-1}%
\end{pmatrix}%
\partial _{x}^{-1}
\end{equation*}

\item Case $g_{7}^{ij}$. We have ($\beta $ and $\gamma $ are arbitrary
constants)%
\begin{equation*}
B=%
\begin{pmatrix}
\frac{\alpha uv+\epsilon }{v^{2}}\partial _{x}+\frac{\alpha }{2v}%
u_{x}+\left( \frac{\alpha u}{2v^{2}}-\frac{\alpha uv+\epsilon }{v^{3}}%
\right) v_{x}+u\partial _{x}^{-1}u & \beta \partial _{x}-\frac{\alpha }{2v}%
v_{x}-u\partial _{x}^{-1}v \\ 
\beta \partial _{x}+\frac{\alpha }{2v}v_{x}-v\partial _{x}^{-1}u & v\partial
_{x}^{-1}v%
\end{pmatrix}%
\end{equation*}%
and we obtain the recursion operator:%
\begin{equation*}
R=%
\begin{pmatrix}
\beta \partial _{x}-\frac{\alpha }{2v}v_{x}-u\partial _{x}^{-1}v & \frac{%
\alpha uv+\epsilon }{v^{2}}\partial _{x}+\frac{\alpha }{2v}u_{x}+\left( 
\frac{\alpha u}{2v^{2}}-\frac{\alpha uv+\epsilon }{v^{3}}\right)
v_{x}+u\partial _{x}^{-1}u \\ 
v\partial _{x}^{-1}v & \beta \partial _{x}+\frac{\alpha }{2v}v_{x}-v\partial
_{x}^{-1}u%
\end{pmatrix}%
\partial _{x}^{-1}.
\end{equation*}

\item Case $g_{8}^{ij}$. Let us consider the following ($c$ is an arbitrary
constant): 
\begin{equation*}
B=%
\begin{pmatrix}
\frac{u}{v}\partial _{x}+\frac{1}{2v}u_{x}-\frac{u}{2v^{2}}v_{x}+u\partial
_{x}^{-1}u & \beta \partial _{x}+\frac{\alpha }{2u}u_{x}-\frac{1}{2v}%
v_{x}-u\partial _{x}^{-1}v \\ 
\beta \partial _{x}-\frac{\alpha }{2u}u_{x}+\frac{1}{2v}v_{x}-v\partial
_{x}^{-1}u & \frac{\alpha v}{u}\partial _{x}-\frac{\alpha v}{2u^{2}}u_{x}+%
\frac{\alpha }{2u}v_{x}+v\partial _{x}^{-1}v%
\end{pmatrix}%
.
\end{equation*}%
The recursion operator is 
\begin{equation*}
R=%
\begin{pmatrix}
\beta \partial _{x}+\frac{\alpha }{2u}u_{x}-\frac{1}{2v}v_{x}-u\partial
_{x}^{-1}v & \frac{u}{v}\partial _{x}+\frac{1}{2v}u_{x}-\frac{u}{2v^{2}}%
v_{x}+u\partial _{x}^{-1}u \\ 
\frac{\alpha v}{u}\partial _{x}-\frac{\alpha v}{2u^{2}}u_{x}+\frac{\alpha }{%
2u}v_{x}+v\partial _{x}^{-1}v & \beta \partial _{x}-\frac{\alpha }{2u}u_{x}+%
\frac{1}{2v}v_{x}-v\partial _{x}^{-1}u%
\end{pmatrix}%
\partial _{x}^{-1}.
\end{equation*}

\item Case $g_{9}^{ij}$. The expressions of the Christoffel symbols make the
Hamiltonian operator and the recursion operator too big to be shown here.
\end{enumerate}

\section{The constant astigmatism equation}

\label{sub2}

In this section we show how to construct an integrable system underlying one of
the bi-Hamiltonian pairs that we found so far. The example contains the
constant astigmatism equation as a particular case; the calculation scheme can
be repeated for any of the bi-Hamiltonian pairs in our classification.

Now, let us consider the metric $g^{ij}_1$ in Theorem~\ref{case2} and change
the variables according with the map $u\mapsto v$, $v\mapsto u$; we obtain 
\begin{equation}
g_{1}^{ij}=%
\begin{pmatrix}
u & \beta \\ 
\beta & \frac{\alpha }{u}%
\end{pmatrix}%
\quad \alpha \neq \beta ^{2},\qquad f^{i}=%
\begin{pmatrix}
0 \\ 
1%
\end{pmatrix}%
\end{equation}%
and the Christoffel symbols of $g_1$ are: 
\begin{gather*}
\Gamma _{1}^{12}=\Gamma _{2}^{22}=\Gamma _{2}^{11}=\Gamma _{1}^{21}=0 \\
\Gamma _{2}^{21}=-\Gamma _{2}^{12}=\Gamma _{1}^{11}=\frac{1}{2}; \\
\Gamma _{1}^{22}=-\frac{\alpha }{2u^{2}};
\end{gather*}%
The operator $B$ is:%
\begin{equation*}
B=%
\begin{pmatrix}
u & \beta \\ 
\beta & \frac{\alpha }{u}%
\end{pmatrix}%
\partial _{x}+%
\begin{pmatrix}
\frac{1}{2} & 0 \\ 
0 & -\frac{\alpha }{2u^{2}}%
\end{pmatrix}%
u_{x}+%
\begin{pmatrix}
0 & -\frac{1}{2} \\ 
\frac{1}{2} & 0%
\end{pmatrix}%
v_{x}+\epsilon 
\begin{pmatrix}
0 & 0 \\ 
0 & \partial _{x}^{-1}%
\end{pmatrix}%
\end{equation*}

Let us apply $B$ to the Casimir $-2v$ of $A$: 
\begin{equation*}
B%
\begin{pmatrix}
0 \\ 
-2%
\end{pmatrix}%
=0+%
\begin{pmatrix}
0 \\ 
\frac{\alpha }{u^{2}}%
\end{pmatrix}%
u_{x}+%
\begin{pmatrix}
1 \\ 
0%
\end{pmatrix}%
v_{x}-\epsilon 
\begin{pmatrix}
0 \\ 
2x%
\end{pmatrix}%
\end{equation*}%
\vspace{5mm}

Now, the constant astigmatism equation 
\begin{equation}  \label{aa1}
u_{tt}+\left(\frac{1}{u}\right)_{xx}+2=0
\end{equation}
can be written in the following form: 
\begin{equation*}
\begin{cases}
u_t=v_x \\ 
v_t=-\left(\frac{1}{u}\right)_x-2x%
\end{cases}%
\end{equation*}
and the associated Hamiltonian operator of Dubrovin-Novikov type is $B$
where $\alpha=1$ and $\beta=0$. By substituting we obtain the following
Christoffel symbols: 
\begin{equation*}
\Gamma^{21}_2=-\Gamma^{12}_2=\Gamma^{11}_1=\frac{1}{2}
\end{equation*}
\begin{equation*}
\Gamma^{22}_1=-\frac{\alpha}{2u^2}=-\frac{1}{2u^2}
\end{equation*}

Therefore, we have 
\begin{equation*}
\begin{cases}
u_{t}=v_{x} \\ 
v_{x}=-\left( -\frac{\alpha}{u^{2}}\right) u_{x}-\epsilon x=-\left( \frac{1}{%
u}\right) _{x}-\epsilon x%
\end{cases}%
\end{equation*}%
In particular, we obtained the equation \eqref{aa1} as the second flow of
the hierarchy generated by the bi-Hamiltonian structure defined by $A$ and $%
B $. The Hamiltonian operator $B$ has the expression 
\begin{equation}\label{eeqq:2}
B=%
\begin{pmatrix}
u & 0 \\ 
0 & \frac{1}{u}%
\end{pmatrix}%
\partial _{x}+%
\begin{pmatrix}
\frac{1}{2} & 0 \\ 
0 & -\frac{1}{2u^{2}}%
\end{pmatrix}%
u_{x}+%
\begin{pmatrix}
0 & -\frac{1}{2} \\ 
\frac{1}{2} & 0%
\end{pmatrix}%
v_{x}+%
\begin{pmatrix}
0 & 0 \\ 
0 & \partial _{x}^{-1}%
\end{pmatrix}%
\end{equation}%
and, according with the computations in Section~\ref{sec:hier-const-astigm}
and the change of variables at the beginning of this Section, we obtain the
following recursion operator: 
\begin{gather*}
R=\left( 
\begin{matrix}
u\partial _{x}+\frac{1}{2}u_{x} & -\frac{1}{2}v_{x} \\ 
+\frac{1}{2}v_{x} & \frac{1}{u}\partial _{x}-\frac{u_{x}}{2u^{2}}+\partial
_{x}^{-1}%
\end{matrix}%
\right) \cdot \left( 
\begin{matrix}
0 & 1 \\ 
1 & 0%
\end{matrix}%
\right) \partial _{x}^{-1}= \\
=%
\begin{pmatrix}
-\frac{1}{2}v_{x} & u\partial _{x}+\frac{1}{2}u_{x} \\ 
\frac{1}{u}\partial _{x}-\frac{u_{x}}{2u^{2}}+\partial _{x}^{-1} & +\frac{1}{%
2}v_{x}%
\end{pmatrix}%
\partial _{x}^{-1}.
\end{gather*}

The reader can also refer to \cite{m.v.13:_lagran_hamil}.  Note that after a
simple change of coordinates the operator $B$ in \eqref{eeqq:2} is exactly the
operator \eqref{eeqq:1} in \cite{m.v.13:_lagran_hamil}.

\section{An example in three components: WDVV equation}
\label{sec:an-example-three}

The equations of associativity, or Witten--Dijkgraaf--Verlinde--Verlinde
equations \cite{D96} contain commuting hydrodynamic type systems
\begin{equation*}
a_{t^{k}}^{i}=\eta ^{im}\left( \frac{\partial ^{2}F}{\partial a^{m}\partial
a^{k}}\right) _{x}.
\end{equation*}%
Indeed, the compatibility conditions $%
(a_{t^{k}}^{i})_{t^{n}}=(a_{t^{n}}^{i})_{t^{k}}$ lead directly to the WDVV system%
\begin{equation*}
\frac{\partial ^{3}F}{\partial a^{i}\partial a^{k}\partial a^{p}}\eta ^{pq}%
\frac{\partial ^{3}F}{\partial a^{q}\partial a^{n}\partial a^{j}}=\frac{%
\partial ^{3}F}{\partial a^{i}\partial a^{n}\partial a^{p}}\eta ^{pq}\frac{%
\partial ^{3}F}{\partial a^{q}\partial a^{k}\partial a^{j}}.
\end{equation*}%
The concept of Frobenius manifold \cite{dubrovin98:_flat_froben} is based on
the existence of a second local Hamiltonian structure for these commuting
hydrodynamic type systems. In the three-component case, if
$\eta ^{ij}=\delta ^{i,4-i}$, then%
\begin{equation*}
F=\frac{1}{2}u^{2}w+\frac{1}{2}uv^{2}+f(v,w),
\end{equation*}%
where the function $f(v,w)$ solves a single third order nonlinear differential
equation%
\begin{equation*}
f_{www}=f_{vvw}^{2}-f_{vww}f_{vvv}.
\end{equation*}%
A simple nontrival solution found by B.A. Dubrovin leads to the ansatz%
\begin{equation*}
f=-\frac{1}{16}v^{4}\gamma (w),
\end{equation*}%
which implies the remarkable Chazy equation%
\begin{equation*}
\gamma ^{\prime \prime \prime }=6\gamma \gamma ^{\prime \prime }-9\gamma
^{\prime ^{2}}.
\end{equation*}

In the semi-simple case, the velocity matrices $\eta ^{pq}\frac{\partial ^{3}F%
}{\partial a^{q}\partial a^{k}\partial a^{j}}$ are non-de\-gen\-er\-ate. So, a
generic solution $\gamma (w)$ of the Chazy equation determines three distinct
characteristic roots of the above velocity matrix. Precisely, according to the
construction by B.A. Dubrovin, we have a pair of commuting hydrodynamic type
systems%
\begin{equation*}
u_{t}=\left( -\frac{1}{4}v^{3}\gamma ^{\prime }(w)\right) _{x},\text{ \ }%
v_{t}=\left( u-\frac{3}{4}v^{2}\gamma (w)\right) _{x},\text{ \ }w_{t}=v_{x},
\end{equation*}%
\begin{equation*}
u_{y}=\left( -\frac{1}{16}v^{4}\gamma ^{\prime \prime }(w)\right) _{x},\text{
\ }v_{y}=\left( -\frac{1}{4}v^{3}\gamma ^{\prime }(w)\right) _{x},\text{ \ }%
w_{y}=u_{x}.
\end{equation*}%
The corresponding velocity matrices
\begin{displaymath}
\eta ^{pq}\frac{\partial ^{3}F}{\partial a^{q}\partial a^{k}\partial a^{2}}
\quad\text{and}\quad
\eta ^{pq}\frac{\partial ^{3}F}{\partial a^{q}\partial a^{k}\partial a^{3}}
\end{displaymath}
are non degenerate, with the exception of the particular case
$\gamma (w)=-2/w$. In the latter case, the hydrodynamic type systems reduce to
the form
\begin{gather}\label{eq:1}
u_{t}=\left( -\frac{1}{2}\frac{v^{3}}{w^{2}}\right) _{x},\text{ \ }%
v_{t}=\left( u+\frac{3}{2}\frac{v^{2}}{w}\right) _{x},\text{ \ }w_{t}=v_{x},
\\\label{eq:3}
u_{y}=\left( \frac{1}{4}\frac{v^{4}}{w^{3}}\right) _{x},\text{ \ }%
v_{y}=\left( -\frac{1}{2}\frac{v^{3}}{w^{2}}\right) _{x},\text{ \ }%
w_{y}=u_{x}.
\end{gather}
Their velocity matrices have a single common characteristic root only; such
degenerate cases are very interesting. For instance, hydrodynamic type
systems with a unique characteristic root were recently investigated \cite%
{kodama16:_confl,except,xue20:_quasil_jordan}. Let us consider
the ffirst of the above three-component hydrodynamic type system \eqref{eq:1}:
\begin{equation*}
\begin{split}
& u_{t}=-\frac{3v^{2}}{2w^{2}}v_{x}+\frac{v^{3}}{w^{3}}w_{x}, \\
& v_{t}=u_{x}+\frac{3v}{w}v_{x}-\frac{3v^{2}}{2w^{2}}w_{x}, \\
& w_{t}=v_{x}.
\end{split}%
\end{equation*}

This system is a first example in the theory of bi-Hamiltonian hydrodynamic
type systems where the first Hamiltonian structure has a non degenerate
metric tensor, while the second Hamiltonian structure has a degenerate
metric tensor. This system possesses an \textquotedblleft
isometry\textquotedblright\ extension, i.e.%
\begin{equation}
\begin{split}
& u_{t}=-\frac{3v^{2}}{2w^{2}}v_{x}+\frac{v^{3}}{w^{3}}w_{x}-x, \\
& v_{t}=u_{x}+\frac{3v}{w}v_{x}-\frac{3v^{2}}{2w^{2}}w_{x}, \\
& w_{t}=v_{x}.
\end{split}
\label{eq:20}
\end{equation}

Eliminating $u$ and introducing the potential function $z$ such that $%
w=z_{x} $ and $v=z_{t}$, we obtain a single, new third order integrable
equation:
\begin{equation*}
z_{ttt}=\left( \frac{3z_{t}^{2}}{2z_{x}}\right) _{xt}-\left( \frac{z_{t}^{3}%
}{2z_{x}^{2}}\right) _{xx}-1.
\end{equation*}%

The system \eqref{eq:20} admits a bi-Hamiltonian pair of the type 
\begin{equation*}
A=%
\begin{pmatrix}
0 & 0 & 1 \\ 
0 & 1 & 0 \\ 
1 & 0 & 0%
\end{pmatrix}%
\partial _{x},\quad B^{ij}=g^{ij}\partial _{x}+\Gamma
_{k}^{ij}u_{x}^{k}+\epsilon f^{i}\partial _{x}^{-1}f^{j},
\end{equation*}%
which is the $n=3$ analogue of the bi-Hamiltonian pair in $2$ components %
\eqref{eq:10}. It can be obtained as follows. The metric of the operator $B$
is 
\begin{equation}
g^{ij}=%
\begin{pmatrix}
\frac{v^{3}}{w^{2}} & \frac{-3v^{2}}{2w} & -v+1 \\ 
\frac{-3v^{2}}{2w} & 2v+1 & w \\ 
-v+1 & w & 0%
\end{pmatrix}%
,  \label{12}
\end{equation}%
and the isometry that defines the nonlocal part of $B$ is $f=\partial _{u}$.
Explicitely, the operator is: 
\begin{align*}
B^{ij}& =%
\begin{pmatrix}
\frac{v^{3}}{w^{2}} & \frac{-3v^{2}}{2w} & -v+1 \\ 
\frac{-3v^{2}}{2w} & 2v+1 & w \\ 
-v+1 & w & 0%
\end{pmatrix}%
\partial _{x}+%
\begin{pmatrix}
0 & 1 & 0 \\ 
-1 & 0 & 0 \\ 
0 & 0 & 0%
\end{pmatrix}%
u_{x}+ \\
& \hphantom{ciaociao}+%
\begin{pmatrix}
\frac{3v^{2}}{2w^{2}} & 0 & 0 \\ 
-\frac{3v}{w} & 1 & 0 \\ 
-1 & 0 & 0%
\end{pmatrix}%
v_{x}+%
\begin{pmatrix}
-\frac{v^{3}}{w^{3}} & 0 & 0 \\ 
\frac{3v^{2}}{2w^{2}} & 0 & 0 \\ 
0 & 1 & 0%
\end{pmatrix}%
w_{x}+%
\begin{pmatrix}
\partial _{x}^{-1} & 0 & 0 \\ 
0 & 0 & 0 \\ 
0 & 0 & 0%
\end{pmatrix}%
\end{align*}

The operators $A$ and $B$ are compatible, hence it is possible to write $B$
in Liouville form: 
\begin{equation*}
B^{ij}=\left( r^{ij}+r^{ji}\right) \partial _{x}+\frac{\partial r^{ij}}{%
\partial u^{k}}u_{x}^{k}+f^{i}\partial _{x}^{-1}f^{j}
\end{equation*}%
where 
\begin{equation*}
r^{ij}=%
\begin{pmatrix}
\frac{v^{3}}{2w^{2}} & u & 1 \\ 
-\frac{3v^{2}}{2w}-u & \frac{1}{2}(2v+1) & 0 \\ 
-v & w & 0%
\end{pmatrix}%
\end{equation*}%
Moreover, it is possible to write $B^{ij}$ as follows: 
\begin{equation*}
B^{ij}=\left( \eta ^{is}\frac{\partial H^{j}}{\partial u^{s}}+\eta ^{js}%
\frac{\partial H^{i}}{\partial u^{s}}\right) \partial _{x}+\eta ^{is}\frac{%
\partial ^{2}H^{j}}{\partial u^{s}\partial u^{k}}u_{x}^{k}+f^{i}\partial
_{x}^{-1}f^{j}
\end{equation*}%
where 
\begin{align*}
& H^{1}(u,v,w)=-uv-\frac{v^{3}}{2w}, \\
& H^{2}(u,v,w)=uw+\frac{v^{2}}{2}+\frac{v}{2}, \\
& H^{3}(u,v,w)=w.
\end{align*}%
This comes from the fact that the local part $B_{0}$ of $B$ is the Lie
derivative of $A$ with respect to a vector field, see \cite%
{mokhov02:_compat_dubrov_novik_hamil_operat} for details.

One can see, that the second metric tensor (\ref{12}) can be presented in
the following form%
\begin{equation}
g^{ij}=\eta ^{ij}+\tilde{g}^{ij},  \label{degen}
\end{equation}%
where the metric $\tilde{g}^{ij}$ is degenerate. Thus, we come to the
observation that our bi-Hamiltonian system has a compatible pair of Hamiltonian
structures, where one of them contains a degenerate metric. This is the first
example of this kind in the theory of bi-Hamiltonian systems.

\begin{theorem}
The non-homogeneous hydrodynamic type system associated with the
bi-Hamiltonian pair $A$, $B$ can be obtained by appling the operator $B$ to
the Casimir $u$ of the operator $A$. In particular: 
\begin{equation*}
\begin{pmatrix}
u \\ 
v \\ 
w%
\end{pmatrix}%
_{t}=A^{ij}\frac{\delta F}{\delta u^{i}}=B^{ij}\frac{\delta L}{\delta u^{i}}
\end{equation*}%
where $L=\int {udx}$ and $F=\displaystyle\int {\left( uv+\frac{v^{3}}{2w}-%
\frac{x^{2}}{2}w\right) dx}$.
\end{theorem}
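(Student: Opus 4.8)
The statement has two independent parts: (i) that applying $B$ to the Casimir $u$ of $A$ reproduces the non-homogeneous system \eqref{eq:20}, and (ii) that the resulting flow is $A^{ij}\delta F/\delta u^i$ for the claimed functional $F$. The plan is to verify each part by direct substitution, exploiting the Liouville and Lie-derivative forms of $B$ already displayed in the text.

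For part (i), I would first record that the variational derivative of $L=\int u\,dx$ is $\delta L/\delta u^1 = 1$, $\delta L/\delta u^2 = \delta L/\delta u^3 = 0$, i.e. the covector $(1,0,0)^T$. Then I would apply the explicit operator $B^{ij}$ (in the fully expanded form given just before the theorem, with all four matrix blocks plus the nonlocal block $\partial_x^{-1}$) to this covector. Acting on a constant covector, every term of the form $(\text{matrix})\partial_x$ and $(\text{matrix})\,u_x^k$ kills the derivative $\partial_x$ of the constant but the Christoffel-type terms $\Gamma^{i1}_k u^k_x$ survive, producing exactly the first columns of the coefficient matrices; the nonlocal term gives $f^1\partial_x^{-1}(f^1\cdot 1) = \partial_x^{-1}(1) = x$ in the first slot (up to sign, to be fixed against the $-x$ appearing in \eqref{eq:20}). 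Reading off the first column of the $v_x$-matrix, the $w_x$-matrix and the $u_x$-matrix and combining, one should obtain precisely the right-hand sides $(-\tfrac{3v^2}{2w^2}v_x + \tfrac{v^3}{w^3}w_x - x,\ u_x + \tfrac{3v}{w}v_x - \tfrac{3v^2}{2w^2}w_x,\ v_x)$. This is a finite, mechanical check; the only subtlety is bookkeeping the sign/normalization of the Casimir (whether it is $u$ or $-u$, or $u$ versus $2u$) so that the inhomogeneous term comes out as $-x$ and not $+x$ or $-2x$.

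For part (ii), I would compute $A^{ij}\,\delta F/\delta u^j$ with $A = \begin{pmatrix}0&0&1\\0&1&0\\1&0&0\end{pmatrix}\partial_x$ and $F = \int\left(uv + \tfrac{v^3}{2w} - \tfrac{x^2}{2}w\right)dx$. The variational derivatives are $\delta F/\delta u^1 = v$, $\delta F/\delta u^2 = u + \tfrac{3v^2}{2w}$, $\delta F/\delta u^3 = -\tfrac{v^3}{2w^2} - \tfrac{x^2}{2}$ (the explicit $x$-dependence of the density is legitimate here exactly because the non-homogeneous operator framework of \cite{ferapontov92:_non_hamil} allows it, matching the structure of the constant astigmatism example). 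Applying $A$ then gives $(\partial_x(\delta F/\delta u^3),\ \partial_x(\delta F/\delta u^2),\ \partial_x(\delta F/\delta u^1))$, and I would check term by term that $\partial_x(-\tfrac{v^3}{2w^2} - \tfrac{x^2}{2}) = -\tfrac{3v^2}{2w^2}v_x + \tfrac{v^3}{w^3}w_x - x$, that $\partial_x(u + \tfrac{3v^2}{2w}) = u_x + \tfrac{3v}{w}v_x - \tfrac{3v^2}{2w^2}w_x$, and that $\partial_x(v) = v_x$. These are elementary $x$-derivatives and coincide with the system \eqref{eq:20}, so the two Hamiltonian representations agree.

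The main obstacle — really the only place where anything could go wrong — is the consistency of signs and scalar normalizations across the three representations: the Casimir of $A$ (is it $u$, $2u$, or $-2u$, by analogy with the $-2v$ Casimir in the $n=2$ case), the parameter $\epsilon$ (here effectively set to $1$), and the sign conventions in the Liouville/Lie-derivative decompositions $r^{ij}$ and $H^i$. I would resolve this by cross-checking the inhomogeneous term: it must appear as $-x$ in the first equation of \eqref{eq:20}, which simultaneously pins down the Casimir normalization in part (i) and the coefficient $-\tfrac{x^2}{2}w$ in $F$ in part (ii) (since $\partial_x(-\tfrac{x^2}{2}) = -x$). Once that single sign is fixed, both computations are routine polynomial/rational algebra in $u,v,w$ and their $x$-derivatives, and the theorem follows.
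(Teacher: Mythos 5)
Your proposal is correct and is exactly the direct-verification argument the paper leaves implicit (the theorem is stated without proof): compute $\delta L/\delta u=(1,0,0)$ and read off the first columns of the Christoffel matrices plus the nonlocal block, and separately apply $A$ to $\delta F/\delta u=\bigl(v,\;u+\tfrac{3v^2}{2w},\;-\tfrac{v^3}{2w^2}-\tfrac{x^2}{2}\bigr)$. The sign subtlety you flag is real and is the only correction needed: with $L=\int u\,dx$ as literally stated, $B^{ij}\delta L/\delta u^j$ yields $\bigl(\tfrac{3v^2}{2w^2}v_x-\tfrac{v^3}{w^3}w_x+x,\,-u_x-\tfrac{3v}{w}v_x+\tfrac{3v^2}{2w^2}w_x,\,-v_x\bigr)$, i.e.\ the negative of \eqref{eq:20}, so the Casimir must be taken as $-\int u\,dx$ (consistent with the $-2v$ normalization used in the two-component constant astigmatism computation).
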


\section{Conclusion}

In this paper we considered the new problem of the classification of compatible
pairs of Hamiltonian operators $A^{ij}=\eta ^{ij}\partial _{x}$ and
$B^{ij}=\tilde{g}^{ij}\partial _{x}+ \tilde{\Gamma}_{k}^{ij}u_{x}^{k}+\epsilon
f^{i}\partial _{x}^{-1}f^{j}$.

If the parameter $\epsilon$ is zero, then we get back to the classical problem
of the description of compatible Hamiltonian operators
$A_{0}^{ij}=\eta ^{ij}\partial _{x}$ and
$B_{0}^{ij}=\tilde{g} ^{ij}\partial _{x}+\tilde{\Gamma}_{k}^{ij}u_{x}^{k}$,
which was deeply investigated in plenty of papers for past almost forty years
(beginning from the seminal article, written by B.A. Dubrovin and S.P. Novikov
in 1983, see \cite{DN83}). We already know a lot of interesting examples of
such bi-Hamiltonian pairs. However, the complete description is determined by
some integrable systems, which were derived and considered, for instance, in
\cite{dubrovin98:_flat_froben}, \cite{ferapontov01}, \cite{mokhov17:_pencil}.
Even in the $2$-component case, such a system is a hydrodynamic type system in
$4$ dependent variables and $2$ independent variables, and a general solution
is not known.

In comparison with the above situation, the search of compatible pairs of
Hamiltonian operators $A^{ij}=\eta ^{ij}\partial _{x}$ and
$B^{ij}=\tilde{g}^{ij}\partial _{x}+\tilde{\Gamma}_{k}^{ij}u_{x}^{k} +\epsilon
f^{i}\partial_{x}^{-1}f^{j}$ when $\epsilon\neq 0$ is a less complicated task.
We have been able to completely solve this problem in the two component case,
and to give a meaningful example in three components (see
Section~\ref{sec:an-example-three}). We are going to continue this
investigation in forthcoming publications.

Here we would just like to mention some nontrivial byproducts of our
classification. If $\epsilon =0$, we obtain new examples of compatible
Hamiltonian operators $A_{0}^{ij}=\eta ^{ij}\partial _{x}$ and
$B_{0}^{ij}=\tilde{g}^{ij}\partial _{x}+\tilde{\Gamma}_{k}^{ij}u_{x}^{k}$. This
means that in the general case a compatible pair of \emph{local} Hamiltonian
operators of the type of $A_0$ and $B_0$ cannot be extended to the case
$\epsilon \neq 0$. So, our approach allows to construct new distinguished
bi-Hamiltonian structures $A_{0}^{ij}=\eta ^{ij}\partial _{x}$ and
$B_{0}^{ij}=\tilde{g}^{ij}\partial_{x}+\tilde{\Gamma}_{k}^{ij}u_{x}^{k}$.
Moreover, usually, compatible pairs of these bi-Hamiltonian structures were
investigated just if the second metric $\tilde{g}^{ij}$ is
non degenerate. However, in our paper we found a list of new examples where
such a metric is degenerate, see, for instance, (\ref{degen}).


\providecommand{\cprime}{\/{\mathsurround=0pt$'$}} \providecommand*{\SortNoop%
}[1]{}

\end{document}